\newtheorem{thm}{Theorem}
\newtheorem{prop}[thm]{Proposition}
\newtheorem{defn}[thm]{Definition}
\newcommand{\mR}{{\mathbb R}}
\newcommand{\mC}{{\mathbb C}}
\newcommand{\trace}{\operatorname{tr}}
\newcommand{\tr}{\operatorname{tr}}
\newcommand{\bm}{{\boldsymbol m}}
\newcommand{\bM}{{\boldsymbol M}}
\newcommand{\brho}{{\boldsymbol\rho}}
\newcommand{\bmu}{{\boldsymbol\mu}}
\newcommand{\bT}{{\boldsymbol{\mathcal T}}}
\newcommand{\cT}{{\mathcal T}}
\newcommand{\cH}{{\mathcal H}}
\newcommand{\cS}{{\mathcal S}}
\newcommand{\cF}{{\mathcal F}}
\newcommand{\cL}{{\mathcal L}}
\newcommand{\trA}{{\trace_{0}}}
\newcommand{\trB}{{\trace_{1}}}
\newcommand{\jj}{{\rm j}}
\newcommand{\bartrA}{{\underline{\trace}_{0}}}
\newcommand{\bartrB}{{\underline{\trace}_{1}}}
\begin{document}
\title{\huge \bf  Matrix-valued Monge-Kantorovich\\ Optimal Mass Transport}

\author{Lipeng Ning, Tryphon T. Georgiou and
Allen Tannenbaum
\thanks{\hspace*{-10pt} L. Ning and T.T. Georgiou are with the Department of Electrical \& Computer Engineering,
University of Minnesota, Minneapolis, MN 55455, {\tt $\{$ningx015,tryphon$\}$@umn.edu}, A. Tannenbaum is with the Comprehensive
Cancer Center and the Department of Electrical \& Computer Engineering, University of
Alabama, Birmingham, AL 35294, {\tt tannenba@uab.edu}}
}

\maketitle

\begin{abstract} We formulate an optimal transport problem for matrix-valued density functions. This is pertinent in the
spectral analysis of multivariable time-series. The ``mass'' represents energy at various frequencies whereas, in addition to a usual
transportation cost across frequencies, a cost of rotation is also taken into account.
We show that it is natural to seek the transportation plan in the tensor product of the spaces for the two matrix-valued marginals.
In contrast to the classical Monge-Kantorovich setting, the transportation plan is no longer supported on a thin zero-measure set.
\end{abstract}

\section{Introduction}

The formulation of optimal mass transport (OMT) goes back to the
work of G. Monge in 1781 \cite{Monge1781}. The modern formulation is
due to Kantorovich in 1947 \cite{Kantorovich1942}. In recent years
the subject is evolving rather rapidly due to the wide range of
applications in economics, theoretical physics, probability, etc.
Important recent monographs on the subject include
\cite{Villani_book,Ambrosio,Rachev1998mass}.

Our interest in the subject of matrix-valued transport originates in
the spectral analysis of multi-variable time-series. It is natural
to consider the weak topology for power spectra. This is because
statistics typically represent integrals of power spectra and hence
a suitable form of continuity is desirable. Optimal mass transport
and the geometry of the Wasserstein metric provide a natural
framework for studying scalar densities. Thus, the scalar OMT theory was
used in \cite{Jiang2012} for modeling slowly time-varying changes in the power spectra of time-series.
The salient feature of matrix-valued densities is that power can shift across frequencies as well as across different channels via rotation of
the corresponding eigenvectors. Thus, transport between
matrix-valued densities requires that we take into account the cost
of rotation as well as the cost of shifting power across frequencies.

Besides the formulation of a ``non-commutative'' Monge-Kantorovich
transportation problem, the main results in the paper are that (1)
the solution to our problem can be cast as a convex-optimization problem, (2) geodesics can be determined by convex programming, and
(3) that the optimal transport plan has support which, in contrast to
the classical Monge-Kantorovich setting, is no longer contained
on a thin zero-measure set.

\section{Preliminaries on Optimal Mass Transport}

Consider two probability density functions $\mu_0$ and $\mu_1$ supported on $\mR$.
Let $M(\mu_0,\mu_1)$ be the set of probability measures $m(x,y)$ on $\mR\times \mR$ with $\mu_0$ and $\mu_1$ as marginal density functions, i.e.
\[
\int_{\mR} m(x,y)dy=\mu_0(x),~ \int_{\mR} m(x,y) dx=\mu_1(y),~m(x,y)\geq0.
\]
The set $M(\mu_0,\mu_1)$ is not empty since
$m(x,y)=\mu_0(x)\mu_1(y)$ is always a feasible solution. Probability
densities can be thought of as distributions of mass and a cost
$c(x,y)$ associated with transferring one unit of mass from one
location $x$ to $y$. For $c(x,y)=|x-y|^2$ the optimal transport cost
gives rise to the 2-Wasserstein metric
\[
W_2(\mu_0, \mu_1)=\cT_2(\mu_0, \mu_1)^\frac12
\]
where
\begin{align}\label{scalar:primal}
\cT_2(\mu_0, \mu_1):=\inf_{m\in M(\mu_0,\mu_1)} \int_{\mR\times \mR} c(x,y)m(x,y) dxdy.
\end{align}
Problem \eqref{scalar:primal} is a linear programming problem with dual
\begin{align}\label{scalar:dual1}
\sup_{\phi,\psi} \bigg\{\int_{\mR}  \phi_0 \mu_0-\phi_1 \mu_1 dx \mid&~  \phi_0(x)-\phi_1(y)\leq c(x,y) \bigg\}
\end{align}
see e.g., \cite{Villani_book}. Moreover,  for the quadratic cost
function $c(x,y)=|x-y|^2$, $\cT_2(\mu_0,\mu_1)$ can also be written
explicitly in term of the cumulative distributions functions
\begin{align*}
F_i(x)=&~\int_{-\infty}^x \mu_idx \text{~for~} i=0,1,
\end{align*}
as follows (see \cite[page 75]{Villani_book})
  \begin{align}\label{eq:scalarcumulative}
  \cT_2(\mu_0,\mu_1)=\int_0^1 |F_0^{-1}(t)-F_1^{-1}(t)|^2 dt,
  \end{align}
and  the optimal joint probability density $m\in M(\mu_0, \mu_1)$
has support on $(x,T(x))$ where $T(x)$ is the sub-differential of a convex lower semi-continuous function. More specifically,
$T(x)$ is uniquely defined by
\begin{align}\label{eq:cumulative}
F_0(x)=F_1(T(x)).
\end{align}
%
Finally, a geodesic $\mu_\tau (\tau\in[0, 1])$ between $\mu_0$ and $\mu_1$ can be written explicitly in terms of the cumulative function
$F_{\tau}$ defined by
\begin{align}\label{eq:scalarGeodesic}
F_\tau((1-\tau)x+\tau T(x))=F_0(x).
\end{align}
Then, clearly,
\begin{align*}
W_2(\mu_0, \mu_\tau)=&~\tau W_2 (\mu_0, \mu_1)\\
W_2(\mu_\tau, \mu_1)=&~(1-\tau)W_2 (\mu_0, \mu_1).
\end{align*}

%

\section{Matrix-valued  Optimal Mass Transport}
We consider the family
\begin{align*}
\cF:=\bigg\{\bmu \mid \mbox{for }x\in\mR, \bmu(x)\in\mC^{n\times n}\mbox{ Hermitian}, \bmu(x)\geq0,\; \trace(\int_{\mR} \bmu(x) dx)=1 \bigg\},
\end{align*}
of Hermitian positive semi-definite, matrix-valued densities on
$\mR$, normalized so that their trace integrates to $1$. We motivate
a transportation cost to this matrix-valued setting and introduce a
generalization of the Monge-Kantorovich OMT to matrix-valued
densities.

\subsection{Tensor product and partial trace}

Consider two $n$-dimensional Hilbert spaces $\cH_0$ and $\cH_1$ with
basis $\{u_1, \ldots, u_n\}$ and $\{v_1, \dots, v_n\}$,
respectively. Let $\cL(\cH_0)$ and $\cL(\cH_1)$ denote the space of
linear operators on $\cH_0$ and $\cH_1$, respectively. For
$\brho_0\in\cL(\cH_0)$ and $\brho_1\in\cL(\cH_1)$, we denote their
tensor product by $\brho_0 \otimes \brho_1 \in\cL(\cH_0\otimes
\cH_1).$ Formally, the latter is defined via
\[
\brho_0 \otimes \brho_1 \;:\; u\otimes v\mapsto  \brho_0 u\otimes \brho_1v.
\]
Since our spaces are finite-dimensional this is precisely the Kronecker product of the corresponding matrix representation of the two operators.

Consider $\brho\in \cL(\cH_0\otimes \cH_1)$ which can be thought of
as a matrix of size $n^2\times n^2$. The partial traces
$\tr_{\cH_0}$ and $\tr_{\cH_1}$, or $\tr_0$ and $\tr_1$ for brevity,
are linear maps
\begin{eqnarray*}
 \brho \in \cL(\cH_0\otimes \cH_1) &\mapsto& \tr_1(\brho)\in \cL(\cH_0)\\
 &\mapsto& \tr_0(\brho)\in \cL(\cH_1)
\end{eqnarray*}
that are defined as follows.
Partition $\brho$ into $n\times n$ block-entries and denote by $\brho_{k\ell}$ the $(k,\ell)$-th block ($1\leq k,\ell\leq n$).
Then the partial trace, e.g.,
\[
\brho_0:=\tr_1(\brho)
\]
is the $n\times n$ matrix with
\[
[\brho_0]_{k\ell}=\trace(\brho_{k\ell}),~\text{for~} 1\leq k,\ell\leq n.
\]
The partial trace
\[
\brho_1:=\tr_0(\brho)
\]
is defined in a similar manner for a corresponding partition of
$\brho$, see e.g., \cite{Petz2008quantum}. More specifically, for
$1\leq i,j \leq n$, let $\brho^{ij}$ be a sub-matrix of $\brho$ of
size $n\times n$ with the $(k, \ell)$-th entry $
[\brho^{ij}]_{k\ell}=[\brho_{k\ell}]_{ij} $. Then the $(i,j)$-th
entry of $\brho_1$ is
\[
[\brho_1]_{ij}=\tr(\brho^{ij}).
\]
Thus
\[
\tr_1(\brho_0\otimes \brho_1)=\tr(\brho_1) \brho_0  \text{~and~} \tr_0(\brho_0\otimes \brho_1)=\tr(\brho_0) \brho_1.
\]


\subsection{Joint density for matrix-valued distributions}
A naive attempt to define a joint probability density given
marginals $\bmu_0, \bmu_1\in\cF_n$ is to consider a matrix-valued
density with support on $\mR\times\mR$ such that $\bm\geq 0$ and
\begin{align}\label{eq:jointNaive}
~\int_{\mR} \bm(x,y)dy= \bmu_0(x),
~\int_{\mR} \bm(x,y)dx= \bmu_1(y).
\end{align}
However, in contrast to the scalar case, this constraint is not always feasible. To see this
consider
\begin{align*}
&~\bmu_0(x)=\left[\begin{array}{cc}\frac12 & 0\\ 0 &0  \end{array}    \right]\delta(x-x_1)+\left[\begin{array}{cc}0 & 0\\ 0 &\frac12  \end{array}    \right]\delta(x-x_2),\\
&~\bmu_1(x)=\left[\begin{array}{cc}~~\frac14 & -\frac14\\ -\frac14 & ~~\frac14 \end{array}    \right]\delta(x-x_1)+\left[\begin{array}{cc}\frac14 & \frac14\\ \frac14 &\frac14  \end{array}    \right]\delta(x-x_2).
\end{align*}
It is easy to show that \eqref{eq:jointNaive} cannot be met.

A natural definition for joint densities $\bm$ that can serve as a
transportation plan may be defined as follows. For
$(x,y)\in\mR\times\mR$,
\begin{subequations}\label{TransportPlan}
\begin{align}
&~\bm(x,y) \text{~is $n^2\times n^2$ positive semi-definite matrix,}\label{eq:TransportPlanA}
\end{align}
and with
\begin{align}
&~\bm_0(x,y):=\trB(\bm(x,y)),
\bm_1(x,y):=\trA(\bm(x,y)),\label{eq:TransportPlanB}
\end{align}
one has
\begin{align}
&~\int_{\mR} \bm_0(x,y) dy= \bmu_0(x), \int_{\mR} \bm_1(x,y)dx= \bmu_1(y).\label{eq:TransportPlanC}
\end{align}
\end{subequations}
Thus, we denote by
\begin{align*}
\bM(\bmu_0,\bmu_1):= \Big\{\bm \mid&~ \eqref{eq:TransportPlanA}-\eqref{eq:TransportPlanC} \text{~are satisfied}\Big\}.
\end{align*}
For this family, given marginals, there is always an admissible
joint distribution as stated in the following proposition.

\begin{prop}
For any $\bmu_0, \bmu_1\in\cF_n$, the set $\bM(\bmu_0,\bmu_1)$ is not empty.
\end{prop}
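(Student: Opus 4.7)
The plan is to exhibit an explicit element of $\bM(\bmu_0,\bmu_1)$, namely the tensor product of the two marginals,
\begin{align*}
\bm(x,y) := \bmu_0(x)\otimes\bmu_1(y),
\end{align*}
which is the natural matrix-valued analogue of the scalar ``independent coupling'' $\mu_0(x)\mu_1(y)$ used in the preliminaries to show non-emptiness of $M(\mu_0,\mu_1)$.

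First I would verify condition \eqref{eq:TransportPlanA}: for each $(x,y)$, both $\bmu_0(x)$ and $\bmu_1(y)$ are Hermitian and positive semi-definite by definition of $\cF_n$, and the Kronecker product of two Hermitian PSD matrices is Hermitian PSD of size $n^2\times n^2$, which is exactly what the definition of $\otimes$ in the previous subsection gives.

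Next I would compute the partial traces using the identities
\[
\tr_1(\brho_0\otimes\brho_1)=\tr(\brho_1)\brho_0, \qquad \tr_0(\brho_0\otimes\brho_1)=\tr(\brho_0)\brho_1
\]
stated in Section III.A. Applied pointwise, these give
\begin{align*}
\bm_0(x,y) &= \trB(\bmu_0(x)\otimes\bmu_1(y)) = \tr(\bmu_1(y))\,\bmu_0(x),\\
\bm_1(x,y) &= \trA(\bmu_0(x)\otimes\bmu_1(y)) = \tr(\bmu_0(x))\,\bmu_1(y),
\end{align*}
which verifies \eqref{eq:TransportPlanB}. Finally, the normalization $\trace\int_{\mR}\bmu_i(x)dx=1$ built into $\cF_n$ gives
\begin{align*}
\int_{\mR}\bm_0(x,y)\,dy = \bmu_0(x)\int_{\mR}\tr(\bmu_1(y))\,dy = \bmu_0(x),
\end{align*}
and analogously $\int_{\mR}\bm_1(x,y)\,dx=\bmu_1(y)$, establishing \eqref{eq:TransportPlanC}.

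There is essentially no obstacle here: the only things being used are that $\otimes$ preserves positive semi-definiteness, the two explicit partial-trace identities from Section III.A, and the trace-one normalization. The content of the proposition is really just that the normalization chosen in the definition of $\cF_n$ is the right one to make the product construction work; any other normalization would produce a scalar factor that must be absorbed, but with $\trace\int\bmu_i=1$ the product plan is admissible as written.
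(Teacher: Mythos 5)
Your proposal is correct and is exactly the paper's construction: the paper's proof simply asserts that $\bm:=\bmu_0\otimes\bmu_1\in\bM(\bmu_0,\bmu_1)$ without elaboration, and you have filled in the routine verification (PSD-ness of the Kronecker product, the partial-trace identities, and the trace-one normalization) correctly.
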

\begin{proof}
Clearly, $
\bm:=\bmu_0\otimes\bmu_1
\in \bM(\bmu_0,\bmu_1).$
\end{proof}

We next motivate a natural form for the transportation cost. This is a functional on the joint density as in the scalar case.
However, besides a penalty on ``linear'' transport we now take into account an ``angular'' penalty as well.

\subsection{Transportation cost}

We interpret $\trace(\bm(x,y))$ as the amount of ``mass'' that is
being transferred from $x$ to $y$. Thus, for a scalar cost function
$c(x,y)$ as before, one may simply consider
\begin{align}\label{eq:matrixprimala}
\min_{\bm\in \bM(\bmu_0,\bmu_1)} \int_{\mR\times \mR} c(x,y)\tr(\bm(x,y)) dxdy.
\end{align}
However, if $\tr(\bmu_0(x))=\tr(\bmu_1(x))~\forall x\in\mR$, then the optimal value of \eqref{eq:matrixprimala} is zero.
Thus \eqref{eq:matrixprimala} fails to quantify mismatch in the matricial setting.

For simplicity, throughout, we only consider marginals $\bmu$, which
pointwise satisfy $\tr(\bmu)>0$.
$\tr(\bmu(x))$ is a scalar-valued density representing mass at location $x$
while $\frac{\bmu(x)}{\tr(\bmu(x))}$ has trace $1$ and contains directional information.
Likewise, for a
joint density $\bm(x,y)$,
assuming $\bm(x,y)\neq 0$, we consider
\begin{align*}
\bartrA(\bm(x,y)):=\trA(\bm(x,y))/\trace(\bm(x,y)) \\
\bartrB(\bm(x,y)):=\trB(\bm(x,y))/\trace(\bm(x,y)).
\end{align*}
Since $\bartrA(\bm(x,y))$ and $\bartrB(\bm(x,y))$ are normalized to
have unit trace, their difference captures the directional mismatch
between the two partial traces. Thus take
\[
\trace(\| (\bartrA- \bartrB)\bm(x,y)\|_{\rm F}^2 \bm(x,y))
\]
to quantify the rotational mismatch. The above motivates the
following cost functional that includes both terms, rotational and
linear:
\[
\trace\bigg( (c(x,y)+\lambda\| (\bartrA-\bartrB) \bm(x,y)\|_{\rm F}^2) \bm(x,y)\bigg)
\]
where $\lambda>0$ can be used to weigh in the relative significance of the two terms.

\subsection{Optimal transportation problem}\label{sec:convex}

In view of the above, we now arrive at the following formulation of
a matrix-valued version of the OMT, namely the determination of
\begin{align}\label{prob:MatrixOMT}
\bT_{2,\lambda}(\bmu_0,\bmu_1) :=\min_{\bm\in\bM(\bmu_0,\bmu_1)} \int_{\mR\times \mR} \trace\bigg( (c+\lambda\| (\bartrA-\bartrB)\bm \|_{\rm F}^2) \bm\bigg) dxdy.
\end{align}
Interestingly, \eqref{prob:MatrixOMT} can be cast as a convex optimization problem. We explain this next.

Since, by definition,
\begin{eqnarray*}
\bartrA(\bm)\trace(\bm)&=&\trA(\bm),\\
\bartrB(\bm)\trace(\bm)&=&\trB(\bm),
\end{eqnarray*}
we deduce that
\begin{align*}
\|(\bartrA-\bartrB)\bm\|_{\rm F}^2\trace(\bm)=&~\frac{\|(\bartrA-\bartrB)\bm\|_{\rm F}^2\trace(\bm)^2}{\trace(\bm)}\\
=&~\frac{\|(\tr_0-\tr_1)\bm\|_{\rm F}^2}{\trace(\bm)}.
\end{align*}
Now let $m(x,y)=\trace(\bm(x,y))$ and let $\bm_0(x,y)$ and
$\bm_1(x,y)$ be as in \eqref{TransportPlan}. The expression for the
optimal cost in \eqref{prob:MatrixOMT} can be equivalently written
as
\begin{align}\label{eq:second}
\min_{\bm_0,\bm_1, m} \Big\{ \int~ \left(c(x,y)m(x,y) +\lambda \frac{\|\bm_0-\bm_1\|_{\rm F}^2}{m}\right) dxdy \mid
&~\bm_0(x,y), ~\bm_1(x,y)\geq 0\nonumber\\
&~\trace(\bm_0(x,y))=\trace(\bm_1(x,y))=m(x,y)\nonumber\\
&\int \bm_0(x,y)dy=\bmu_0(x)\nonumber\\
&\int \bm_1(x,y)dx=\bmu_1(y)\Big\}.
\end{align}
Since, for $x>0$,
\[
\frac{(y-z)^2}{x}
\] is convex in the arguments $x,y,z$, it readily follows that the integral in \eqref{eq:second} is a convex functional. All
constraints in \eqref{eq:second} are also convex and therefore, so is the optimization problem.

\section{On the geometry of  Optimal Mass Transport}

A standard result in the (scalar) OMT theory is that the
transportation plan is the sub-differential of a convex function. As
a consequence the transportation plan has support only on a
monotonically non-decreasing zero-measure set. This is no longer
true for the optimal transportation plan for matrix-valued density
functions and this we discuss next.

In optimal transport theory for scalar-valued distributions, the
optimal transportation plan has a certain cyclically monotonic
property \cite{Villani_book}. More specifically, if $(x_1, y_1)$,
$(x_2, y_2)$ are two points where the transportation plan has
support, then $x_2>x_1$ implies $y_2\geq y_1$. The interpretation is
that optimal transportation paths do not cross. For the case of
matrix-valued distributions as in \eqref{prop:metric}, this property
may not hold in the same way. However, interestingly, a weaker
monotonicity property holds for the supporting set of the optimal
matrix transportation plan. The property is defined next and the precise
statement is given in Proposition \ref{thm:M} below.

\begin{defn}
A set $\cS\subset \mR^2$ is called a {\em $\lambda$-monotonically
non-decreasing}, for $\lambda>0$, if for any two points $(x_1, y_1),
(x_2, y_2)\in \cS$, it holds that
\[
(x_2-x_1)(y_1-y_2)\leq \lambda.
\]
\end{defn}
\vspace*{.3cm} A geometric interpretation for a
$\lambda$-monotonically non-decreasing set is that if $(x_1, y_1)$,
$(x_2, y_2)\in \cS$ and $x_2>x_1$, $y_1>y_2$, then the area of the
rectangle with vertices $(x_i,y_j)$ ($i,j\in\{1,2\}$) is not larger
than $\lambda$. The transportation plan of the scalar-valued optimal
transportation problem with a quadratic cost has support on a
$0$-monotonically non-decreasing set.

\begin{prop} \label{thm:M}
Given $\bmu_0, \bmu_1\in \cF$, let $\bm$ be the optimal
transportation plan in \eqref{prob:MatrixOMT} with $\lambda>0$. Then
$\bm$ has support on at most a $(4\cdot \lambda)$-monotonically non-decreasing
set.
\end{prop}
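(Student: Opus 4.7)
The natural approach is a cyclical monotonicity / swap argument, adapting the classical scalar OMT proof but accounting for the rotational penalty.

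Suppose, for contradiction, that two points $(x_1,y_1),(x_2,y_2)$ lie in the support of an optimal plan $\bm$ and satisfy $(x_2-x_1)(y_1-y_2)>4\lambda$. The plan is to construct a competing plan that routes some mass through the ``crossed'' pairs $(x_1,y_2)$ and $(x_2,y_1)$ and show that this strictly reduces the cost. I will work with the convex reformulation in \eqref{eq:second}, where the cost decouples into a linear piece in $c(x,y)m$ and a rotational piece $\lambda\|\bm_0-\bm_1\|_{\rm F}^2/m$.

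First I would introduce the local directional data. Near $(x_i,y_i)$ the optimal transport carries mass with partial traces that, after normalization, give unit-trace PSD matrices $\hat{\bA}_0,\hat{\bA}_1$ (at $(x_1,y_1)$) and $\hat{\bB}_0,\hat{\bB}_1$ (at $(x_2,y_2)$). For a small swap of mass $\mu>0$, the key point is to produce admissible tensor entries at the two new sites whose partial traces match the pieces removed from the marginals. A natural choice is
\[
\bm\text{ at }(x_1,y_2):~\mu\,\hat{\bA}_0\otimes\hat{\bB}_1,\qquad \bm\text{ at }(x_2,y_1):~\mu\,\hat{\bB}_0\otimes\hat{\bA}_1,
\]
which are PSD and whose partial traces are $\mu\hat{\bA}_0,\mu\hat{\bB}_1$ and $\mu\hat{\bB}_0,\mu\hat{\bA}_1$, respectively. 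Hence the marginals \eqref{eq:TransportPlanC} are preserved.

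Next I would compute the cost change. The scalar part contributes, for $c(x,y)=(x-y)^2$,
\[
\mu\bigl[c(x_1,y_2)+c(x_2,y_1)-c(x_1,y_1)-c(x_2,y_2)\bigr]=-2\mu(x_2-x_1)(y_1-y_2).
\]
For the rotational part, a direct expansion of $\|\cdot\|_{\rm F}^2$ (using $\trace(X^2)$ and the cross terms) gives
\[
\|\hat{\bA}_0-\hat{\bB}_1\|_{\rm F}^2+\|\hat{\bB}_0-\hat{\bA}_1\|_{\rm F}^2-\|\hat{\bA}_0-\hat{\bA}_1\|_{\rm F}^2-\|\hat{\bB}_0-\hat{\bB}_1\|_{\rm F}^2 = 2\langle\hat{\bA}_0-\hat{\bB}_0,\,\hat{\bA}_1-\hat{\bB}_1\rangle,
\]
so the total first-order change is
\[
\Delta = 2\mu\Bigl[-(x_2-x_1)(y_1-y_2)+\lambda\langle\hat{\bA}_0-\hat{\bB}_0,\,\hat{\bA}_1-\hat{\bB}_1\rangle\Bigr].
\]
Optimality forces $\Delta\geq 0$. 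By Cauchy--Schwarz, $\langle\hat{\bA}_0-\hat{\bB}_0,\hat{\bA}_1-\hat{\bB}_1\rangle\leq\|\hat{\bA}_0-\hat{\bB}_0\|_{\rm F}\|\hat{\bA}_1-\hat{\bB}_1\|_{\rm F}$, and for any unit-trace PSD matrix $X$ one has $\|X\|_{\rm F}^2=\sum\lambda_i^2\leq(\sum\lambda_i)^2=1$, whence the inner product is bounded by $4$. Therefore $(x_2-x_1)(y_1-y_2)\leq 4\lambda$, which is the claimed $(4\lambda)$-monotonicity.

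The main obstacle I expect is making the swap rigorous when the plan is not atomic: I would replace the point evaluations by integrals over small boxes $U_i\times V_i$ around $(x_i,y_i)$, extract the normalized directional data from the partial-trace densities restricted to these boxes, and pass to the limit as the boxes shrink along Lebesgue points of $\bm_0$ and $\bm_1$; the marginal-preserving perturbation is then the obvious signed combination of the above tensor products multiplied by a small bump. A secondary subtlety is verifying that the perturbed plan remains PSD, but since the added pieces are tensor products of PSD matrices and the subtractions remain PSD for small $\mu$, this is immediate.
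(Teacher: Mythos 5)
Your argument is correct and is essentially the paper's own proof: a four-point exchange argument that assumes $(x_2-x_1)(y_1-y_2)>4\lambda$, reroutes mass through the crossed pairs $(x_1,y_2)$ and $(x_2,y_1)$ using tensor products that preserve the partial-trace marginals, and contradicts optimality. The only difference is bookkeeping — you perturb by an infinitesimal $\mu$ and use the polarization identity plus Cauchy--Schwarz to bound the rotational term by $4$, whereas the paper swaps the full mass $\min(m_{11},m_{22})$, tracks weighted averages of the directional matrices at the crossed points, and bounds each term separately by $\|A-B\|_{\rm F}^2\leq 2$; your version is arguably cleaner (and, using $\|A-B\|_{\rm F}^2\leq 2$ for unit-trace PSD matrices instead of the triangle inequality, would even yield the sharper constant $2\lambda$).
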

\begin{proof}
See the appendix.
\end{proof}

Then the optimal transportation cost $\cT_{2,\lambda}(\bmu_0, \bmu_1)$ satisfies the following
properties:
\begin{enumerate}
\item $\cT_{2,\lambda}(\bmu_0, \bmu_1)$=$\cT_{2,\lambda}(\bmu_1,\bmu_0)$,
\item $\cT_{2,\lambda}(\bmu_0, \bmu_1)\geq0$,
\item $\cT_{2,\lambda}(\bmu_0,\bmu_1)=0$ if and only if $\bmu_0=\bmu_1$.
\end{enumerate}
Thus, although $\cT_{2,\lambda}(\bmu_0, \bmu_1)$  can be used to compare matrix-valued
densities, it is not a metric and neither is
$\cT_{2,\lambda}^\frac12$ since the triangular inequality does not hold in general.
We will introduce a slightly different
formulation of a transportation problem which does give rise to a
metric.

\subsection{Optimal transport on a subset}
In this subsection, we restrict attention to a certain subset of transport
plans $\bM(\bmu_0, \bmu_1)$ and show that the corresponding optimal
transportation cost induces a metric. More specifically, let
\begin{align*}
\bM_0(\bmu_0,\bmu_1):=\bigg\{\bm \mid~ \bm(x,y)=\bmu_0(x)\otimes\bmu_1(y) a(x,y),~ \bm \in\bM\bigg\}.
\end{align*}
For $\bm(x,y)\in \bM_0(\bmu_0, \bmu_1)$,
\begin{align*}
\bartrA(\bm(x,y)):=\bmu_1(x)/\trace(\bmu_1(x)) \\
\bartrB(\bm(x,y)):=\bmu_0(y)/\trace(\bmu_0(y)).
\end{align*}
Given $\bmu_0$ and $\bmu_1$, the ``orientation'' of the mass of $\bm(x,y)$ is fixed.
Thus, in this case, the optimal transportation cost is
\begin{align}\label{cost:TlambdaB}
\tilde\bT_{2,\lambda}(\bmu_0, \bmu_1) :=
\min_{\bm \in \bM_0(\bmu_0,\bmu_1)} \int \trace\bigg( (c+\lambda\| (\bartrA-\bartrB) \bm(x,y)\|_{\rm F}^2) \bm\bigg) dxdy.
\end{align}

\begin{prop}\label{prop:metric}
For $\bT_{2,\lambda}$ as in \eqref{cost:TlambdaB} and $\bmu_0, \bmu_1\in \cF$,
\begin{equation}\label{eq:metric}
d_{2,\lambda}(\bmu_0, \bmu_1):=\left(\tilde \bT_{2,\lambda}(\bmu_0, \bmu_1)\right)^{\frac12}
\end{equation}
defines a metric on $\cF$.
\end{prop}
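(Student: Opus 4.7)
The strategy is to show that, for the restricted class $\bM_0$, the cost $\tilde\bT_{2,\lambda}(\bmu_0, \bmu_1)$ reduces to a standard scalar $2$-Wasserstein cost on a ``lifted'' metric space, so that $d_{2,\lambda}$ inherits all the metric axioms directly from classical Kantorovich theory.

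To carry out the reduction, I would parametrize any $\bm \in \bM_0(\bmu_0, \bmu_1)$ by the scalar weight $a(x,y) \geq 0$ in $\bm(x,y) = \bmu_0(x) \otimes \bmu_1(y)\, a(x,y)$ and compute the two partial traces using $\tr_1(\brho_0 \otimes \brho_1) = \tr(\brho_1)\brho_0$ and $\tr_0(\brho_0 \otimes \brho_1) = \tr(\brho_0)\brho_1$. The marginal constraints defining $\bM(\bmu_0,\bmu_1)$ then reduce exactly to the statement that $\pi(x,y) := a(x,y)\, \tr(\bmu_0(x))\, \tr(\bmu_1(y))$ is a nonnegative scalar joint density with marginals $\tr(\bmu_0)$ and $\tr(\bmu_1)$. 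Substituting into the integrand in \eqref{cost:TlambdaB} and applying the identity $\|(\bartrA - \bartrB)\bm\|_{\rm F}^2 \tr(\bm) = \|(\tr_0 - \tr_1)\bm\|_{\rm F}^2/\tr(\bm)$ already used in Section~\ref{sec:convex}, the cost collapses to
\[
\trace\bigl((c+\lambda\|(\bartrA-\bartrB)\bm\|_{\rm F}^2)\bm\bigr) \;=\; \pi(x,y)\,\Bigl(|x-y|^2 + \lambda\, \|\tilde\bmu_0(x) - \tilde\bmu_1(y)\|_{\rm F}^2\Bigr),
\]
where $\tilde\bmu_i := \bmu_i/\tr(\bmu_i)$ denotes the normalized ``direction'' field.

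Next, I would recognize this as a classical $2$-Wasserstein cost in a lifted ambient space. Let $\mathcal{X} := \mR \times \cS_n$, where $\cS_n$ is the compact set of $n\times n$ Hermitian unit-trace positive semi-definite matrices, equipped with the product metric $D\bigl((x,\alpha),(y,\beta)\bigr) := \sqrt{|x-y|^2 + \lambda\, \|\alpha - \beta\|_{\rm F}^2}$, and to each $\bmu \in \cF$ associate the probability measure $P_\bmu$ on $\mathcal{X}$ given as the push-forward of the scalar density $\tr(\bmu(x))\,dx$ along the graph map $x \mapsto (x, \tilde\bmu(x))$. Because $P_\bmu$ is supported on a graph, any coupling between $P_{\bmu_0}$ and $P_{\bmu_1}$ on $\mathcal{X}\times\mathcal{X}$ is in bijection with a scalar coupling $\pi$ of the trace marginals, and under the previous computation the two costs agree. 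Hence $d_{2,\lambda}(\bmu_0,\bmu_1) = W_2(P_{\bmu_0}, P_{\bmu_1})$, the usual $2$-Wasserstein distance on the metric space $(\mathcal{X}, D)$.

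The three metric axioms for $d_{2,\lambda}$ then transfer from those of $W_2$: symmetry and non-negativity are immediate, while the triangle inequality is the standard gluing-lemma argument applied to the optimal couplings of $(P_{\bmu_0},P_{\bmu_1})$ and $(P_{\bmu_1},P_{\bmu_2})$ in $(\mathcal{X},D)$, followed by Minkowski in $L^2(D^2\,d\Pi)$. For the identity of indiscernibles, $d_{2,\lambda}(\bmu_0,\bmu_1)=0$ forces $P_{\bmu_0}=P_{\bmu_1}$, and the map $\bmu \mapsto P_\bmu$ is injective since the first marginal of $P_\bmu$ recovers $\tr(\bmu)$ and the disintegration of $P_\bmu$ recovers $\tilde\bmu$, hence $\bmu$ itself. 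I expect the only non-routine point to be verifying the bijection between scalar couplings $\pi$ and couplings on $\mathcal{X}\times\mathcal{X}$ at locations where the trace marginal vanishes; this is handled cleanly by the standing assumption that $\tr(\bmu)>0$ pointwise adopted earlier in the paper.
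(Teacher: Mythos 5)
Your proposal is correct, and the underlying computations coincide with the paper's, but the packaging is genuinely different and worth noting. The paper verifies the triangle inequality by hand: it takes the optimal plans $\bm_{01},\bm_{12}$, glues their scalar trace components $m_{01},m_{12}$ into a three-variable density $m(x,y,z)$ via the standard gluing lemma, forms the composite candidate plan $\bm_{02}$, and applies Minkowski's inequality to the integrand $\bigl((x-z)^2+\lambda\|\tilde\bmu_0(x)-\tilde\bmu_2(z)\|_{\rm F}^2\bigr)^{1/2}$. Your reduction carries out exactly the same gluing-plus-Minkowski argument, but only after first observing that on $\bM_0$ the whole problem \emph{is} a classical Monge--Kantorovich problem: every admissible plan is determined by a scalar coupling $\pi$ of $\tr\bmu_0$ and $\tr\bmu_1$, the cost integrand is $D\bigl((x,\tilde\bmu_0(x)),(y,\tilde\bmu_1(y))\bigr)^2\,\pi(x,y)$ for the product metric $D$ on $\mR\times\cS_n$, and the measures $P_{\bmu_i}$ are graph-supported so couplings on the lifted space biject with scalar couplings. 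This identification, which the paper does not make explicit, buys you all metric axioms (and, in principle, weak continuity, geodesics, and completeness statements) directly from classical Wasserstein theory, and your treatment of the identity of indiscernibles via injectivity of $\bmu\mapsto P_\bmu$ is more explicit than the paper's ``straightforward to prove.'' The one point to keep honest is the bijection of couplings and the recovery of $a(x,y)=\pi(x,y)/(\tr\bmu_0(x)\tr\bmu_1(y))$, both of which rely on the standing assumption $\tr\bmu>0$ pointwise exactly as you note; modulo the measurability of $x\mapsto\tilde\bmu(x)$, which the paper also leaves implicit, there is no gap.
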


\begin{proof}
It is straightforward to prove that
 \[
 d_{2,\lambda}(\bmu_0,\bmu_1)=d_{2,\lambda}(\bmu_1,\bmu_0)\geq0
 \]
 and that
 $d_{2,\lambda}(\bmu_0,\bmu_1)=0$ if and only if $\bmu_0=\bmu_1$. We will show that the triangle inequality also holds.
 For $\bmu_0, \bmu_1, \bmu_2\in \cF_n$, let
 \begin{align*}
\bm_{01}(x,y)=&~\frac{\bmu_{0}(x)}{\trace(\bmu_0(x))}\otimes\frac{\bmu_{1}(y)}{\trace(\bmu_1(y))} m_{01}(x,y)\\
\bm_{12}(y,z)=&~\frac{\bmu_{1}(y)}{\trace(\bmu_1(y))}\otimes\frac{\bmu_{2}(z)}{\trace(\bmu_2(z))} m_{12}(y,z)
 \end{align*}
 be the optimal transportation plan for the pairs $(\bmu_0, \bmu_1)$ and $(\bmu_1, \bmu_2)$, respectively,
 where $m_{01}$ and $m_{12}$ are two (scalar-valued) joint densities on $\mR^2$ with marginals $\trace(\bmu_0)$, $\trace(\bmu_1)$
 and $\trace(\bmu_1)$, $\trace(\bmu_2)$, respectively. Given $m_{01}(x,y)$ and $m_{12}(y,z)$ there is a joint density
 function $m(x, y, z)$ on $\mR^3$ with $m_{01}$ and $m_{12}$ as the marginals on the corresponding subspaces \cite[page 208]{Villani_book}. We denote
 \[
 \bm(x,y,z)=\frac{\bmu_{0}(x)}{\trace(\bmu_0(x))}\otimes\frac{\bmu_{1}(y)}{\trace(\bmu_1(y))}\otimes\frac{\bmu_{2}(z)}{\trace(\bmu_2(z))} m(x,y,z)
 \]
 then it has $\bm_{01}$ and $\bm_{12}$ as the matrix-valued marginal distributions.

 Now, let $\bm_{02}(x,z)=\frac{\bmu_{0}(x)}{\trace\bmu_0(x)}\otimes\frac{\bmu_{2}(z)}{\trace\bmu_2(z)} m_{02}(x,z)$ be the marginal distribution of $\bm(x,y,z)$ when tracing out the $y$-component. Then $\bm_{02}(x,z)$ is a candidate transportation plan between $\bmu_0$ and $\bmu_2$. Thus
 \begin{align*}
 d_{2,\lambda}(\bmu_0,\bmu_2)
  \leq &\left(\int_{\mR^2} \left((x-z)^2+\lambda \|\frac{\bmu_0(x)}{\trace\bmu_0(x)}-\frac{\bmu_2(z)}{\trace\bmu_2(z)} \|_{\rm F}^2\right) m_{02} dxdz\right)^\frac12\\
  =&\left(\int_{\mR^3} \left((x-z)^2+\lambda \|\frac{\bmu_0(x)}{\trace\bmu_0(x)}-\frac{\bmu_2(z)}{\trace\bmu_2(z)} \|_{\rm F}^2\right) m dxdydz\right)^\frac12\\
  =&\bigg(\int_{\mR^3} \Big((x-y+y-z)^2+\lambda \|\frac{\bmu_0(x)}{\trace\bmu_0(x)}-\frac{\bmu_1(y)}{\trace\bmu_1(y)}+
  \frac{\bmu_1(y)}{\trace\bmu_1(y)}-\frac{\bmu_2(z)}{\trace\bmu_2(z)} \|_{\rm F}^2\Big) m dxdydz\bigg)^\frac12\\
 \leq& \bigg(\int_{\mR^2} \Big((x-y)^2+\lambda \|\frac{\bmu_0(x)}{\trace\bmu_0(x)}-\frac{\bmu_1(y)}{\trace\bmu_1(y)} \|_{\rm F}^2\Big)m_{01} dxdy \bigg)^\frac12+\\
 &~~ \bigg(\int_{\mR^2} \Big((y-z)^2+\lambda \|\frac{\bmu_1(y)}{\trace\bmu_1(y)}-\frac{\bmu_2(z)}{\trace\bmu_2(z)} \|_{\rm F}^2\Big)m_{12} dydz \bigg)^\frac12\\
 =&~ d_{2,\lambda}(\bmu_0,\bmu_1)+d_{2,\lambda}(\bmu_1,\bmu_2)
\end{align*}
where the last inequality is from the fact that $L_2$-norm defines a metric.
\end{proof}

\begin{prop} \label{thm:M0}
Given $\bmu_0, \bmu_1\in \cF$, let $\bm$ be the optimal transportation plan in \eqref{eq:metric},
then $\bm$ has support on at most a $(2\cdot \lambda)$-monotonically non-decreasing set.
\end{prop}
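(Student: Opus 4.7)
The plan is to recognize that over $\bM_0(\bmu_0,\bmu_1)$ the matrix-valued transport problem collapses to a \emph{scalar} Kantorovich problem with a modified cost, and then to apply the standard cyclic-monotonicity characterization of the optimal support. Every admissible $\bm\in\bM_0(\bmu_0,\bmu_1)$ has the form $\bm(x,y)=A(x)\otimes B(y)\,m(x,y)$, where $A(x):=\bmu_0(x)/\tr\bmu_0(x)$, $B(y):=\bmu_1(y)/\tr\bmu_1(y)$, and $m(x,y)$ is a scalar joint density whose marginals are $\tr\bmu_0$ and $\tr\bmu_1$ (the latter because $\tr A(x)=\tr B(y)=1$ and the $\bM_0$ marginal constraints reduce exactly to this). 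Substituting into \eqref{cost:TlambdaB} turns the objective into $\int c_\lambda(x,y)\,m(x,y)\,dx\,dy$ with
\[
c_\lambda(x,y):=(x-y)^2+\lambda\,\|A(x)-B(y)\|_{\rm F}^2,
\]
so minimizing over $\bm\in\bM_0$ is precisely a scalar Kantorovich problem on $\mR$ with cost $c_\lambda$.

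Next I would invoke the classical $c$-cyclic monotonicity of the optimal plan in such a scalar problem. In its two-point form this says: for any $(x_1,y_1),(x_2,y_2)$ in the support of the optimal $m$,
\[
c_\lambda(x_1,y_1)+c_\lambda(x_2,y_2)\;\leq\;c_\lambda(x_1,y_2)+c_\lambda(x_2,y_1),
\]
since otherwise swapping destinations on a positive-measure neighborhood would strictly decrease the cost. A short expansion, using $\|X-Y\|_{\rm F}^2=\tr(X^{2})+\tr(Y^{2})-2\tr(XY)$ for Hermitian $X,Y$, shows that the $(x-y)^2$ part of $c_\lambda$ contributes $2(x_2-x_1)(y_1-y_2)$ to the difference of the two sides, while the Frobenius part contributes $-2\tr\bigl((A_1-A_2)(B_1-B_2)\bigr)$, with $A_i:=A(x_i)$, $B_j:=B(y_j)$. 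Rearranging and applying Cauchy--Schwarz for the Frobenius inner product yields
\[
(x_2-x_1)(y_1-y_2)\;\leq\;\lambda\,\|A_1-A_2\|_{\rm F}\,\|B_1-B_2\|_{\rm F}.
\]

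The final step is to bound each Frobenius distance by $\sqrt{2}$. Each $A_i$ is Hermitian, positive semi-definite, and has unit trace, so its eigenvalues lie in $[0,1]$ and sum to $1$, giving $\|A_i\|_{\rm F}^2\leq 1$; combined with $\tr(A_1A_2)\geq 0$ (the trace of a product of two positive semi-definite matrices is nonnegative), this produces $\|A_1-A_2\|_{\rm F}^2\leq 2$, and likewise $\|B_1-B_2\|_{\rm F}\leq\sqrt{2}$. Substituting delivers the desired bound $(x_2-x_1)(y_1-y_2)\leq 2\lambda$, i.e.\ the support of the optimal $\bm$ is $(2\lambda)$-monotonically non-decreasing. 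The one genuinely nontrivial ingredient is the passage from optimality of $m$ to a pointwise swap inequality on its support; this is the content of the standard cyclic-monotonicity theorem for scalar Kantorovich problems (see e.g.\ \cite{Villani_book}), so apart from that invocation the proof is purely algebraic.
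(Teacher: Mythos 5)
Your proof is correct, and it takes a cleaner route than the paper's. The paper does not reduce to a scalar problem; instead it argues by contradiction directly on the matrix-valued plan: assuming $(x_2-x_1)(y_1-y_2)>2\lambda$, it explicitly rearranges mass among the four points $(x_i,y_j)$ (with a case split on whether $m_{22}\geq m_{11}$ or not), verifies the marginals are preserved, and shows the cost strictly decreases. After cancellation this boils down to exactly the two-point swap inequality you invoke, which the paper then establishes crudely: it discards the nonnegative terms $\lambda\|A_1-B_1\|_{\rm F}^2+\lambda\|A_2-B_2\|_{\rm F}^2$ on one side and bounds $\|A_1-B_2\|_{\rm F}^2+\|A_2-B_1\|_{\rm F}^2\leq 4$ on the other. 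Your version buys three things: (i) the reduction of the $\bM_0$-restricted problem to a scalar Kantorovich problem with cost $c_\lambda$ is made explicit (and is verified correctly -- the marginal constraints do collapse to $\int m\,dy=\tr\bmu_0$, $\int m\,dx=\tr\bmu_1$); (ii) the case analysis disappears, since two-point $c$-cyclic monotonicity is symmetric in the masses; (iii) keeping all four Frobenius terms and combining them into $2\tr\bigl((A_1-A_2)(B_1-B_2)\bigr)$ before applying Cauchy--Schwarz gives the sharper intermediate estimate $(x_2-x_1)(y_1-y_2)\leq\lambda\|A_1-A_2\|_{\rm F}\|B_1-B_2\|_{\rm F}$, which degenerates to $2\lambda$ only in the worst case. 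The one point to flag is the one you already flag yourself: citing the cyclic-monotonicity theorem for the support of an optimal plan strictly requires some regularity of the cost (continuity or lower semicontinuity of $c_\lambda$, hence of $x\mapsto\bmu_0(x)/\tr\bmu_0(x)$), which the marginals in $\cF$ need not possess. But the paper's own swapping argument is informal in exactly the same way (it treats $\bm$ as if it had atoms at the four points), so your proof is at the same level of rigor and is, if anything, more transparent about where the gap lies.
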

\begin{proof}
We need to prove that if $\bm(x_1, y_1)\neq 0$ and $\bm(x_2, y_2)\neq 0$, then $x_2>x_1$, $y_1>y_2$ implies
\begin{align}\label{eq:area2}
(y_1-y_2)(x_2-x_1)\leq2\lambda.
\end{align}
Assume that $\bm$ evaluated at the four points $(x_i,y_j)$, with $i, j\in\{1,2\}$, is as follows
\begin{align*}
\bm(x_i,y_j)=m_{ij}\cdot A_i\otimes B_j
\end{align*}
with
\[
A_i=\frac{\bmu_0(x_i)}{\trace(\bmu_1(x_i))},
~B_i=\frac{\bmu_0(y_i)}{\trace(\bmu_1(y_i))},
\]
and $m_{11}, m_{22}>0$.
The steps of the proof are similar to those of Proposition \ref{thm:M}: first, we
assume that Proposition \ref{thm:M0} fails and that
\[
(y_1-y_2)(x_2-x_1)>2\lambda.
\]
Then we show that a smaller cost can be obtained by rearranging the ``mass''.
Consider the situation when $m_{22}\geq m_{11}$ first and let $\hat \bm$ be a new transportation plan with
\begin{align*}
\hat{\bm}(x_1, y_1)=&~0\\
\hat{\bm}(x_1,y_2)=&~(m_{11}+m_{12})\cdot A_1\otimes B_2 \\
\hat{\bm}(x_2,y_1)=&~(m_{11}+m_{21})\cdot A_2\otimes B_1\\
\hat{\bm}(x_2,y_2)=&~(m_{22}-m_{11})\cdot A_2\otimes B_2 .
\end{align*}
Then, $\hat \bm$ has the same marginals as $\bm$ at the four points and the cost incurred by $\bm$ is
\begin{equation}\label{eq:cost:m}
\sum_{i=1}^2\sum_{j=1}^2 m_{ij}\left((x_i-y_j)^2+\lambda \|A_i-B_j \|_{\rm F}^2 \right)
\end{equation}
while the cost incurred by $\hat\bm$ is
\begin{align}
&(m_{11}+m_{12})\left((x_1-y_2)^2+\lambda \|A_1-B_2\|_{\rm F}^2 \right)\nonumber\\
+&(m_{11}+m_{21})\left((x_2-y_1)^2+\lambda \|A_2-B_1\|_{\rm F}^2 \right)\nonumber\\
+&(m_{22}-m_{11})\left((x_2-y_2)^2+\lambda \|A_2-B_2\|_{\rm F}^2 \right).\label{eq:cost:hatm}
\end{align}
After canceling the common terms, to show that \eqref{eq:cost:m} is larger than \eqref{eq:cost:hatm}, it suffices to show that
\begin{align*}
(y_1-x_1)^2+(y_2-x_2)^2+\lambda\|A_1-B_1 \|_{\rm F}^2+\lambda\|A_2-B_2 \|_{\rm F}^2\nonumber\\
\geq (y_2-x_1)^2+(y_1-x_2)^2+\lambda\|A_1-B_2 \|_{\rm F}^2+\lambda\|A_2-B_1\|_{\rm F}^2.
\end{align*}
The above holds since
\begin{align*}
&(y_1-x_1)^2+(y_2-x_2)^2+\lambda\|A_1-B_1 \|_{\rm F}^2+\lambda\|A_2-B_2 \|_{\rm F}^2\\
\geq&(y_1-x_1)^2+(y_2-x_2)^2\\
=&(y_1-x_2)^2+(y_2-x_1)^2+2(x_2-x_1)(y_1-y_2)\\
>&(y_1-x_2)^2+(y_2-x_1)^2+4\lambda\\
\geq&(y_1-x_2)^2+(y_1-x_2)^2+\lambda(\|A_1-B_2 \|_{\rm F}^2+\|A_2-B_1 \|_{\rm F}^2).
\end{align*}
The case $m_{11}> m_{22}$ proceeds similarly.
\end{proof}

%

\section{Example}

We highlight the relevance of the matrix-valued OMT to spectral analysis by presenting an numerical example of spectral morphing.
The idea is to model slowly time-varying changes in the spectral domain by geodesics in a suitable geometry (see e.g.,
 \cite{Jiang2012,JNG}). The importance of OMT stems from the fact that it induces a weakly continuous metric. Thereby, geodesics smoothly shift spectral power across frequencies lessening the possibility of a fade-in fade-out phenomenon. The classical theory of OMT allows constructing such geodesics for scalar-valued distributions. The example below demonstrates that we can now have analogous construction of geodesics of matrix-valued power spectra as well.

Starting with $\bmu_0, \bmu_1\in\cF$ we approximate the geodesic between them by identifying $N-1$ points between the two. More specifically, we set
$\bmu_{\tau_0}=\bmu_0$ and $\bmu_{\tau_N}=\bmu_1$, and determine
$\bmu_{\tau_k}\in \cF_n$ for $k=1, \ldots, N-1$ by solving
\begin{equation}\label{eq:interp}
\min_{\bmu_{\tau_k},0<k<N} \sum_{k=0}^{N-1} \cT_{2,\lambda}(\bmu_{\tau_{k+1}}, \bmu_{\tau_{k}}).
\end{equation}
As noted in Section~\ref{sec:convex}, numerically this can be solved via a convex
programming problem. The numerical example is based on the following
two matrix-valued power spectral densities
\begin{align*}
\bmu_0&=\left[
      \begin{array}{cc}
        1 & 0 \\
        0.2e^{-\jj\theta} & 1 \\
      \end{array}
    \right]\left[
      \begin{array}{cc}
        \frac{1}{|a_0(e^{\jj\theta})|^2} & 0 \\
        0 & 0.01 \\
      \end{array}
    \right]\left[
      \begin{array}{cc}
        1 & 0.2e^{\jj\theta}\\
        0 & 1 \\
      \end{array}
    \right]\\
\bmu_1&=\left[
      \begin{array}{cc}
        1 & 0.2 \\
        0 & 1 \\
      \end{array}
    \right]\left[
      \begin{array}{cc}
        0.01 & 0 \\
        0 & \frac{1}{|a_1(e^{\jj\theta})|^2} \\
      \end{array}
    \right]\left[
      \begin{array}{cc}
        1 & 0 \\
        0.2 & 1 \\
      \end{array}
    \right]
\end{align*}
with
\begin{align*}
a_0(z)=~&(z^2-1.8\cos(\frac{\pi}{4})z+0.9^2)\\
&(z^2-1.4\cos(\frac{\pi}{3})z+0.7^2)\\
a_1(z)=~&(z^2-1.8\cos(\frac{\pi}{6})z+0.9^2)\\
&(z^2-1.5\cos(\frac{2\pi}{15})z+0.75^2),
\end{align*}
shown in Figure \ref{fig:Mui}. The value of a power spectral density at
each point in frequency is a $2\times 2$ Hermitian matrix. Hence, the $(1, 1)$,
$(1, 2)$, and $(2, 2)$ subplots display the magnitude of the corresponding entries, i.e.,
$|\bmu(1,1)|$, $|\bmu(1,2)|$ ($=|\bmu(2,1)|$) and
$|\bmu(2,2)|$, respectively.  The $(2,1)$
subplot displays the phase $\angle \bmu(1,2)$ ($= -\angle \bmu(2,1)$).

The three dimensional plots in Figure \ref{fig:3D} show the
solution of \eqref{eq:interp} with $\lambda=0.1$ which is an approximation of a geodesic. The
two boundary plots represent the power spectra $\bmu_0$ and $\bmu_1$ shown
in blue and red, respectively, using the same convention about magnitudes and phases.
There are in total $7$ power spectra
$\bmu_{\tau_k}$, $k=1, \ldots, 7$ shown along the geodesic between $\bmu_0$ and
$\bmu_1$, and the time indices corresponds to $\tau_k=\frac{k}{8}$.
It is interesting to observe the smooth shift of the energy from one
``channel'' to the other one over the geodesic path while the peak shifts from one frequency to another.

\begin{figure}[htb]\begin{center}
\hspace*{-.5cm}
\includegraphics[totalheight=5.5cm]{./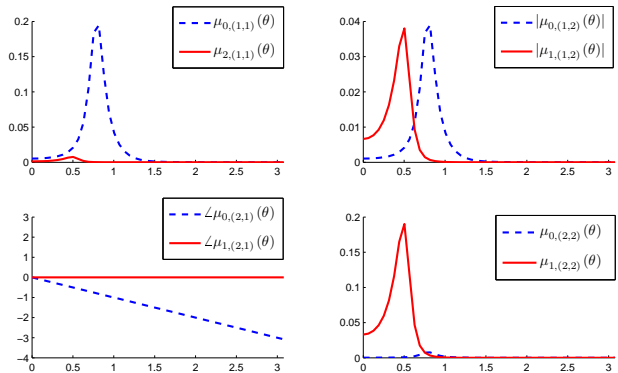}
\caption{Subplots (1,1), (1,2) and (2,2) show $\bmu_i(1,1), |\bmu_i(1,2)|$ (same as $|\bmu_i(2,1)|$) and $ \bmu_i(2,2)$. Subplot (2,1) shows $\angle(\bmu_i(2,1))$ for $i\in\left\{ 0,1\right\}$ in blue and red, respectively.}\label{fig:Mui}\end{center}
\end{figure}

\begin{figure}[htb]\begin{center}
\hspace*{-1cm}
\includegraphics[totalheight=5.5cm]{./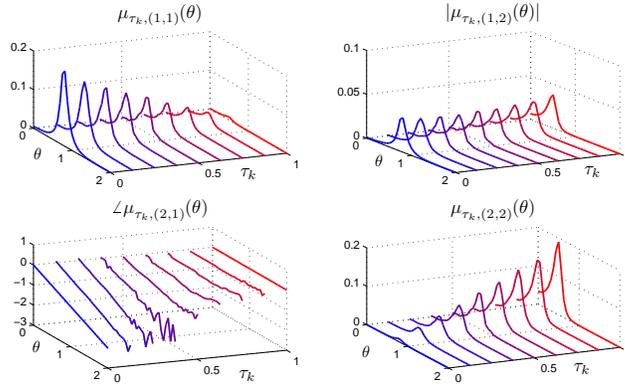}
\caption{The interpolated results $\bmu_{\tau_k}$ for $k=0,\ldots,8$ computed from \eqref{eq:interp} with $\bmu_0$ and $\bmu_1$ as the two boundary points: subplots (1,1), (1,2) and (2,2) show $\bmu_{\tau_k}(1,1), |\bmu_{\tau_k}(1,2)|$ (same as $|\bmu_{\tau_k}(2,1)|$) and $ \bmu_{\tau_k}(2,2)$, subplot (2,1) shows $\angle(\bmu_{\tau_k}(2,1))$.}\label{fig:3D}\end{center}
\end{figure}

\section{Conclusions}

This paper considers the optimal mass transportation problem of
matrix-valued densities. This is motivated by the need for a suitable topology for the spectral
analysis of multivariable time-series. It is well known that the OMT
between scalar densities induces a Riemannian metric
\cite{Benamou2000computational,Jordan1998variational} (see also
\cite{Tannenbaum2010signals} a systems viewpoint and connections to
image analysis and metrics on power spectra).
Our interest has been in extending such a Riemannian structure to
matrix-valued densities. Thus, we formulate a ``non-commutative'' version
of the Monge-Kantorovich transportation problem which can be cast as
a convex-optimization problem. Interestingly, in contrast to the
scalar case, the optimal transport plan is no longer supported on a set of
measure zero. Versions of non-commutative Monge-Kantorovich transportation has been
studied in the context of free-probability \cite{Biane2001free}. The relation of
that to our formulation is still unclear. Finally, we note that if the
matrix-valued distributions commute, then it is easy to check that
our set-up reduces to that of a number of scalar problems, which is also the case in \cite{Biane2001free}.

\section{Appendix: proof of Proposition \ref{thm:M}}

We need to prove that if $\bm(x_1, y_1)\neq 0$ and $\bm(x_2, y_2)\neq 0$, then $x_2>x_1$, $y_1>y_2$ implies
\begin{align}\label{eq:area}
(x_2-x_1)(y_1-y_2)\leq 4\lambda.
\end{align}
Without loss of generality, let
\begin{eqnarray}\label{eq:optimalm}
\bm(x_i, y_j)&=& m_{ij}\cdot A_{ij}\otimes B_{ij}
\end{eqnarray}
with $A_{ij}, B_{ij}\geq 0$, $\trace(A_{ij})=\trace(B_{ij})=1$ and $i,j \in \{1, 2\}$.
Note that $m_{12}$ and $m_{21}$ could be zero if $\bm$ does not have support on the particular point.
We assume that the condition in the proposition fails and
\begin{equation}\label{eq:assumption}
(x_2-x_1)(y_1-y_2)> 4\lambda,
\end{equation}
then we show that
by rearranging mass the cost can be reduced.

We first consider the situation when $m_{22}\geq m_{11}$. By rearranging the value of $\bm$ at the four points $(x_i, y_j)$ with $i, j\in \{1, 2\}$, we construct a new transportation plan $\tilde \bm$ at these four locations as follows
\begin{subequations}\label{eq:tildem}
\begin{eqnarray}
\tilde\bm(x_1,y_1)&=& 0\\
\tilde\bm(x_1,y_2)&=& (m_{11}+m_{12}) \cdot\tilde A_{12}\otimes \tilde B_{12}\\
\tilde\bm(x_2,y_1)&=& (m_{11}+m_{21})\cdot \tilde A_{21}\otimes \tilde B_{21}\\
\tilde \bm(x_2, y_2)&=&  (m_{22}-m_{11}) \cdot A_{22}\otimes B_{22}
\end{eqnarray}
\end{subequations}
where
\begin{eqnarray*}
\tilde A_{12}&=& \frac{m_{11}A_{11}+m_{12}A_{12}}{m_{11}+m_{12}}, \tilde B_{12}= \frac{m_{11}B_{22}+m_{12}B_{12}}{m_{11}+m_{12}}\\
\tilde A_{21}&=& \frac{m_{11}A_{22}+m_{21}A_{21}}{m_{11}+m_{21}}, \tilde B_{21}= \frac{m_{11}B_{11}+m_{21}B_{21}}{m_{11}+m_{21}}.
\end{eqnarray*}
This new transportation plan $\tilde \bm$ has the same marginals as $\bm$ at $x_1, x_2$ and $y_1, y_2$.
The original cost incurred by $\bm$ at these four locations is
\begin{equation}\label{eq:costm}
\sum_{i=1}^2\sum_{j=1}^2 m_{ij}\left( (x_i-y_j)^2+\lambda \|A_{ij}-B_{ij} \|_{\rm F}^2\right)
\end{equation}
while the cost incurred by $\tilde \bm$ is
\begin{align}\label{eq:costtildem}
&(m_{11}+m_{12})\left((x_1-y_2)^2+\lambda \|\tilde A_{12}-\tilde B_{12}\|_{\rm F}^2\right)\nonumber\\
+&(m_{11}+m_{21})\left((x_2-y_1)^2+\lambda \|\tilde A_{21}-\tilde B_{21} \|_{\rm F}^2 \right) \nonumber\\
+&(m_{22}-m_{11})\left((x_2-y_2)^2+\lambda \|A_{22}-B_{22} \|_{\rm F}^2    \right) .
\end{align}
After simplification, to show that \eqref{eq:costm} is larger than \eqref{eq:costtildem}, it suffices to show that
\begin{align}\label{eq:termA}
&~ 2m_{11} (x_2-x_1)(y_1-y_2)
\end{align}
is larger than
\begin{subequations}\label{eq:sub}
\begin{align}
&\lambda m_{11}\left(\sum_{i=1}^2\sum_{j\neq i} \|\tilde A_{ij}-\tilde B_{ij} \|_{\rm F}^2-\sum_{i=1}^2\|A_{ii}-B_{ii} \|_{\rm F}^2  \right)  \label{eq:subA}\\
&\hspace*{-.4cm}+\lambda m_{12}\left( \|\tilde A_{12}-\tilde B_{12} \|_{\rm F}^2-\|A_{12}-B_{12} \|_{\rm F}^2  \right) \label{eq:subB}\\
&\hspace*{-.4cm}+\lambda m_{21}\left( \|\tilde A_{21}-\tilde B_{21} \|_{\rm F}^2-\|A_{21}-B_{21} \|_{\rm F}^2  \right) \label{eq:subC}.
\end{align}
\end{subequations}
From the assumption in \eqref{eq:assumption}, the value of
$\eqref{eq:termA}> 20\lambda m_{11}$.
We derive upper bounds for each term in \eqref{eq:sub}. First,
\begin{eqnarray*}
\eqref{eq:subA}\leq \lambda m_{11} \left( \| \tilde A_{12}-\tilde B_{12}\|_{\rm F}^2+\| \tilde A_{21}-\tilde B_{21}\|_{\rm F}^2\right)\leq 4 \lambda m_{11}
\end{eqnarray*}
where the last inequality follows from the fact that for $A, B\geq0$ and $\trace(A)=\trace(B)=1$,
\[
\|A-B \|_{\rm F}^2=\trace(A^2-2AB+B^2)\leq \trace(A^2+B^2)\leq 2.
\]
For an upper bound of \eqref{eq:subB},
\begin{align*}
&\|\tilde A_{12}-\tilde B_{12} \|_{\rm F}^2-\|A_{12}-B_{12} \|_{\rm F}^2\\
=& \trace\left((\tilde A_{12}-\tilde B_{12}+A_{12}-B_{12})(\tilde A_{12}-\tilde B_{12}-A_{12}+B_{12})  \right)\\
=&\frac{m_{11}}{m_{11}+m_{12}}\left(\|A_{11}-B_{22}\|_{\rm F}^2-\|A_{12}-B_{12}\|_{\rm F}^2-\frac{m_{12}}{m_{11}+m_{12}}\|A_{11}-B_{22}-A_{12}+B_{12}\|_{\rm F}^2  \right)\\
\leq& \frac{m_{11}}{m_{11}+m_{12}}\|A_{11}-B_{22}\|_{\rm F}^2\\
\leq& 2\frac{m_{11}}{m_{11}+m_{12}}
\end{align*}
where the second equality follows from the definition of $\tilde A_{12}$ and $\tilde B_{12}$ while the last inequality is obtained by bounding the terms in the trace.
Thus
\[
\eqref{eq:subB}\leq 2\lambda m_{12}\frac{m_{11}}{m_{11}+m_{12}}\leq 2 \lambda m_{11}.
\]
In a similar manner, $\eqref{eq:subC}\leq 2 \lambda m_{11}$. Therefore,
\[
\eqref{eq:sub}\leq 8\lambda m_{11}< \eqref{eq:termA}
\]
which implies that the cost incurred by $\tilde \bm$ is smaller than the cost incurred by $\bm$.

For the case where $m_{11}> m_{22}$, we can prove the claim by constructing a new transportation plan $\hat\bm$ with values
\begin{eqnarray*}
\hat\bm(x_1,y_1)&=& (m_{11}-m_{22})\cdot A_{11}\otimes B_{11}\\
\hat\bm(x_1,y_2)&=& (m_{12}+m_{22}) \cdot\hat A_{12}\otimes \hat B_{12}\\
\hat \bm(x_2,y_1)&=& (m_{21}+m_{22})\cdot \hat A_{21}\otimes \hat B_{21}\\
\hat \bm(x_2, y_2)&=&  0
\end{eqnarray*}
with
\begin{eqnarray*}
\hat A_{12}&=& \frac{m_{12}A_{12}+m_{22}A_{11}}{m_{12}+m_{22}}, \hat B_{12}= \frac{m_{12}B_{12}+m_{22}B_{22}}{m_{12}+m_{22}}\\
\hat A_{21}&=& \frac{m_{21}A_{21}+m_{22}A_{22}}{m_{21}+m_{22}}, \hat B_{21}= \frac{m_{21}B_{21}+m_{22}B_{11}}{m_{21}+m_{22}}.
\end{eqnarray*}
The rest of the proof is carried out in a similar manner.

\section*{Acknowledgments}

This work was supported in part by grants from NSF, NIH, AFOSR, ONR,
and MDA.

\bibliographystyle{IEEEtran}
\bibliography{IEEEabrv,MatrixOMT}
\end{document}